\newtheorem{Theorem}{Theorem}
\begin{document}

\title{Achievable Degrees of Freedom of the K-user MISO Broadcast Channel with Alternating CSIT via Interference Creation-Resurrection \vspace{-1em}}
\author{\IEEEauthorblockN{Mohamed Seif\IEEEauthorrefmark{1}, Amr El-Keyi\IEEEauthorrefmark{3}, and Mohammed Nafie\IEEEauthorrefmark{1}\IEEEauthorrefmark{2}}\\

\IEEEauthorblockA{\IEEEauthorrefmark{1}Wireless Intelligent Networks Center (WINC), Nile University, Egypt\\
\IEEEauthorrefmark{2}Dept. of EECE, Faculty of Engineering, Cairo University, Egypt\\
\IEEEauthorrefmark{3}Department of Systems and Computer Engineering, Carleton University, Ottawa, Canada\\
        Email: {m.seif@nu.edu.eg,  amr.elkeyi@sce.carleton.ca, mnafie@nu.edu.eg}
}}


\maketitle

\begin{abstract}

Channel state information at the transmitter affects the degrees of freedom of the wireless networks. In this paper, we analyze the DoF for the K-user
multiple-input single-output (MISO) broadcast channel (BC) with
synergistic alternating  channel state information at the
transmitter (CSIT). Specifically, the CSIT of each user alternates between
three states, namely, perfect CSIT (P), delayed CSIT (D) and no
CSIT (N) among different time slots. For the K-user MISO BC, we show that the total achievable degrees of freedom (DoF) are given by $\frac{K^{2}}{2K-1}$ through utilizing the synergistic benefits of CSIT patterns. We compare the achievable DoF with results reported previously in the literature in the case of delayed CSIT and hybrid CSIT models.
\end{abstract}

\IEEEpeerreviewmaketitle





\smallskip
\noindent \textbf{Index Terms:} Broadcast channel, degrees of
freedom, interference alignment, alternating CSIT, interference
creation-resurrection.

%
\IEEEpeerreviewmaketitle





\makeatletter{\renewcommand*{\@makefnmark}{} \footnotetext{ \vspace{0.02in} 
 This work is supported by a grant from the Egyptian NTRA. The statements made herein are solely the responsibility of the author[s].”
}\makeatother}

\section{Introduction}

Due to the rapid growth in wireless traffic, interference
management is essential to provide the required quality
of service (QoS) for future wireless networks. Traditional prior
work focused on reducing the interference power at the receivers.
Recently, interference alignment (IA) has been proposed and
studied on various networks such as the interference, broadcast
and X channels. IA is an elegant technique to decrease the impact
of interference through reducing the dimension of the
interference subspace thanks to the seminal work of
\cite{maddah2008communication,cadambe2008interference}.


An important performance measure for a communication network is
its degrees of freedom (DoF) which determines the behavior of the
sum capacity in the high signal-to-noise ratio (SNR) regime. In
particular, the network capacity under a transmission power $P$ is
given by  \cite{jafar2011interference}
\begin{equation}
C(P)=\text{DoF} \log(P)+o(\log(P))
\end{equation}
where $\lim_{P\rightarrow \infty} \frac{o(\log(P))}{\log(P)} = 0$.


In capacity characterization work, it is a common assumption that
receivers know the channel state information (CSI) perfectly and
instantaneously, while the CSI knowledge at the transmitter(s)
(CSIT) is usually subject to some limitations. At one extreme, it
is assumed that the transmitters know the CSI instantaneously and
perfectly (full CSIT assumption). Under this condition, the
capacity region and, hence the DoF region, of the multiple-input
multiple-output (MIMO) broadcast channel was characterized in
\cite{weingarten2006capacity}. The DoF of the K-user single-input
single-output (SISO) interference channel was shown to be
$\frac{K}{2}$ with full CSIT \cite{cadambe2008interference}. Also,
it was shown in \cite{cadambe2009interference} that the $M \times
K$ SISO $X$ channel with full CSIT has $\frac{MK}{M + K - 1}$ DoF.
In \cite{cadambe2009degrees}, it was proved that channel output
feedback does not provide any DoF benefit in interference and X
channels under the full CSIT assumption. At the other extreme, the
transmitter(s) are assumed to have no knowledge about CSI. In this
case, the K-user multiple-input single-output (MISO) broadcast
channel was studied in \cite{jafar2005isotropic}. Other works
include \cite{vaze2012degree} which characterized the DoF regions
of the K-user MIMO broadcast channel, interference channel and X
channel. Also, \cite{huang2012degrees,zhu2012degrees,vaze2012new}
studied the DoF region of the two-user MIMO broadcast and
interference channels with no CSIT by developing upper and lower
bounds on the DoF. It was shown in \cite{vaze2012degree} that the
MISO broadcast, SISO interference and SISO X channels under
isotropic i.i.d. fading can achieve no more than one DoF.

Maddah Ali and Tse investigated a delayed CSIT model, which is an
intermediate assumption between the two extremes; full CSIT and no
CSIT. This model was introduced in \cite{maddah2012completely} for
the K-user Gaussian MISO broadcast channel (BC). They showed that
the K-user BC under delayed CSIT can achieve at most
$K/(1+\frac{1}{2}+\dots+\frac{1}{K})$ DoF which is strictly
greater than one DoF. Also, in \cite{maleki2012retrospective},
Maleki et al. applied the delayed CSIT model to the X-channel and
showed that the 2 user SISO X channel under delayed CSIT
assumption can achieve $\frac{8}{7}$ DoF. A variety of work
concerning CSIT availability models have been studied such as:
quantized CSIT \cite{jindal2006mimo,kobayashi2008much}, compound
CSIT
\cite{weingarten2007compound,gou2011degrees,maddah2010degrees} and
mixed CSIT \cite{gou2012optimal}. \vspace{-1mm}
\subsection*{Related Work}
Another interesting model is the alternating CSIT model that was
first introduced by Tandon et. al. in
\cite{tandon2013synergistic}. The authors of the pre-mentioned
paper studied the synergistic benefits of alternating CSIT for the
2-user MISO broadcast channel and defined the DoF region
$\mathcal{D}$ for different patterns of alteration. Also, the same
authors in \cite{tandon2012minimum} studied the K-user case and
identified the minimum CSIT pattern to achieve the upper bound on
the total DoF,  which is given by $\min (M,K)$, for the MISO
broadcast channel with an $M$ antenna transmitter and $K$ single
antenna users. The achievable DoF under this model is upper
bounded by
\begin{equation}
D_{\Sigma}(K) \leq \frac{K(M+(\min(M,K)-1)\lambda)}{M+K-1}
\end{equation}
where $\lambda=\frac{\min(M,K)}{K}$ is the fraction time that CSIT
is perfect per user.

In \cite{amuru2014degrees}, the authors considered the hybrid CSIT
model for the BC; in which the CSIT pattern is fixed during the
channel uses. In their framework, there is a perfect CSIT for a
subset of receivers and delayed CSIT for the remaining receivers.
For the 3-user case, they showed that for a 2-antenna transmitter
with perfect CSIT for one user and delayed CSIT for the other two
users, the BC can achieve at most $\frac{5}{3}$ DoF. Also, they
studied the system with 3 antennas at the transmitter and showed
that for the previous hybrid CSIT pattern a total DoF of
$\frac{9}{5}$ is achievable. For the same number of antennas, i.e.
three, but with a higher CSIT setting; in which perfect CSIT is
available for two users while delayed CSIT for the third user,
$\frac{9}{4}$ total DoF can be achieved.

The authors of \cite{wagdy2014degrees} studied the SISO X channel with
synergistic alternating CSIT. They proposed schemes based on
interference creation-resurrection (ICR) that achieve the upper
bound on the DoF of the 2-user network which is $\frac{4}{3}$ DoF. Also, they
characterized the DoF region $\mathcal{D}$ as a function of the
distribution of CSIT states, that are basically; perfect $(P)$,
delayed $(D)$ and no CSIT $(N)$.

In this paper, we propose a scheme based on ICR under alternating
CSIT for the K-user BC. The ICR scheme is partitioned into two
phases: phase one is associated with the delayed CSIT and no CSIT states. In this phase, 
information terms are delivered to receivers with no CSIT availability  and
interference terms (to be resurrected in phase two) are received by receivers with delayed CSIT. In phase two, we deliver useful linear
combinations of past interference terms to the receivers in order
to decode their desired messages. We show that the achievable DoF for this network is given by
\begin{equation}
 D_{\Sigma}(K) = \frac{K^{2}}{2K-1}
\end{equation}
\noindent and the distribution of fraction of time of the different states $\{P,D,N\}$ required for our proposed scheme is
\begin{equation}
\lambda_{P} = \frac{(K-1)^{2}}{2K^{2}-K}, \lambda_{D} = \frac{K-1}{2K-1}, \lambda_{N}=\frac{1}{K}.
\end{equation}



The rest of the paper is organized as follows. Section II
describes the system model. The proposed scheme is discussed in
Section III. Section IV provides numerical evaluation of the
attained  DoF expression and shows the performance gains for our
proposed system compared to previous work. Finally, we conclude
the paper in Section V.


\section{System Model}
We consider a MISO broadcast channel with $K$ transmit antennas
and $K$ single antenna receivers. The received signal at the $i$th
receiver is given by
\begin{equation}
Y_{i}(t)=H_{i}(t) X(t)+N_{i}(t), \hspace{0.1in} i=1, \dots, K
\end{equation}
where $X(t)$ is the $K \times 1$ transmitted signal at time $t$
with a power constraint $E\{|X(t)|^{2}\} \leq P$. The additive
noise $N_{i}(t)\sim\mathcal{CN}(0,1)$ at time $t$ generated at
receiver $R_{i}$ is circularly symmetric white Gaussian noise with
zero mean and unit variance. $H_{i}(t)$ is the $1 \times K$
channel vector from the transmitter to receiver $R_{i}$ at time
$t$ which is sampled from a continuous distribution whose elements
are complex Gaussian. The channel coefficients are assumed to be
i.i.d. across the receivers. Let $r_{i}(P)$ denote the achievable
rate of message $W_{i}$ for a given transmission power $P$ defined
as $r_{i}(P)=\frac{\log_{2}(|W_{i}|)}{n}$ where $|W_{i}|$ is the
cardinality of the message set and $n$ is the number of channel
uses. The DoF region $\mathcal{D}$ is defined as the set of all
achievable tuples $(d_{1},d_{2}, \dots, d_{K}) \in
\mathds{R}_{+}^{K}$ where $d_{i}=\lim_{P \rightarrow \infty}$
$\frac{r_{i}(P)}{\log_{2}(P)}$ is the DoF for message $W_{i}$. The
total DoF of the network is defined as
\begin{equation}
D_{\Sigma}(K)= \max _{ (d_{1}, d_{2}, \dots, d_{K}) \in
\mathcal{D}}{ d_{1}+ d_{2}+\dots+d_{K} }.
\end{equation}

     \begin{figure}[ht!]
        \centering
        \includegraphics[scale=0.35]{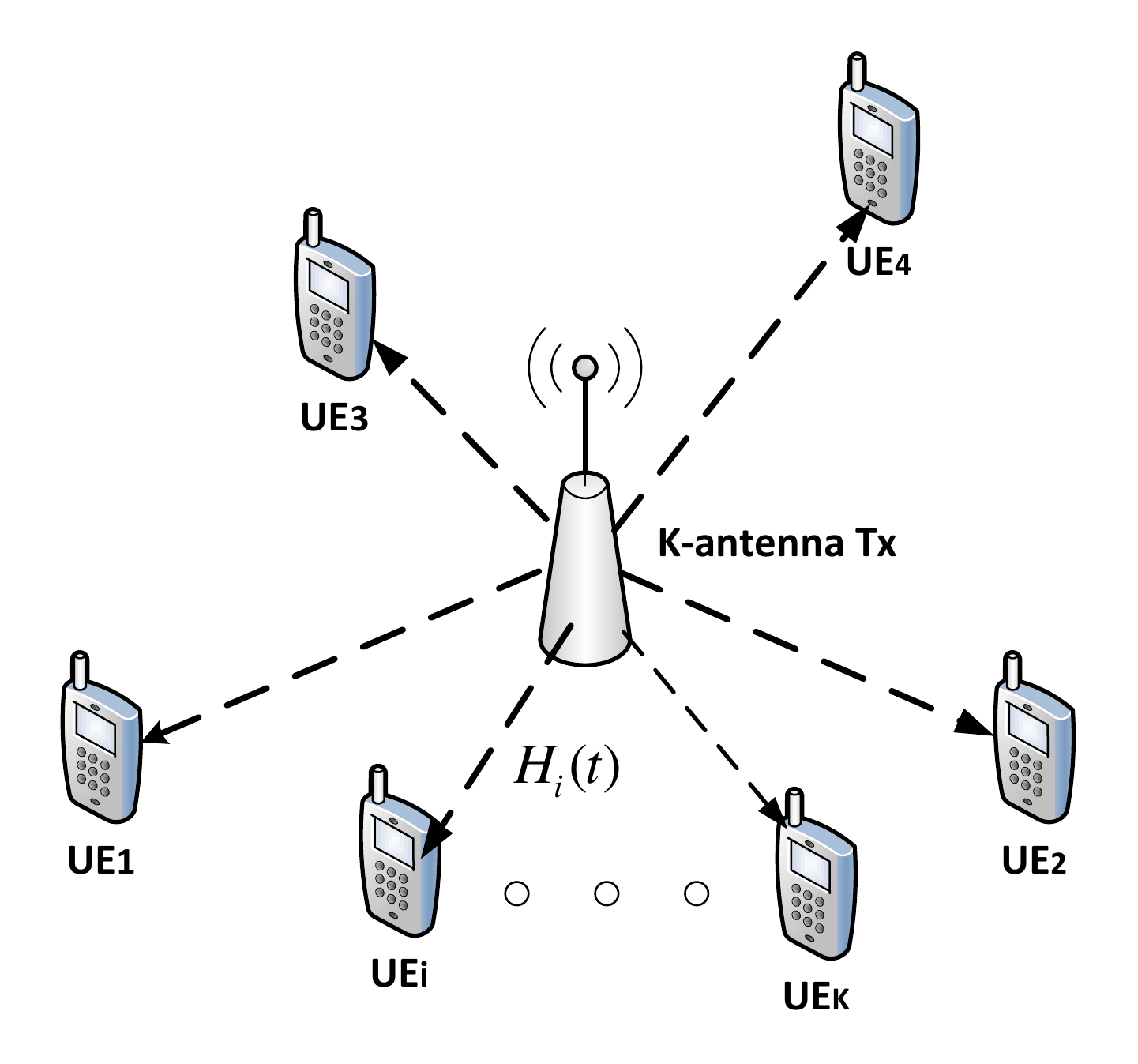}
        \caption{Network Model: A MISO BC with a K-antenna transmitter and K single antenna users. }
        \label{fig::BC_system_model}

     \end{figure}
We assume that the receivers have perfect and global channel state
information. Furthermore, we consider three different states of
the availability of CSIT
\begin{enumerate}
    \item Perfect CSIT ($P$): identifies the state of CSIT in which CSIT is available to the transmitter instantaneously and without error.
    \item Delayed CSIT ($D$): identifies the state of CSIT in which CSIT is available to the transmitter with some delay greater than or equal one time slot duration and without error.
    \item No CSIT ($N$): identifies the state of CSIT in which CSIT is not available to transmitter at all.
\end{enumerate}
The state of CSIT availability of the channel to the $i$th
receiver at time instant $t$ is denoted by $S_{i}(t)$; where, $S_{i}(t)\in \{P,D,N\}$. For
instance, $S_{2}(t)=P$ indicates that the transmitter has perfect and
instantaneous knowledge of $H_{2}$ at time instant $t$. In addition, let $S_{12\dots
K}(t)$ denote the collection of the states of CSIT availability of the channels to the
receivers $\{1,2,\dots, K\}$ at time slot $t$, respectively. Therefore,
$S_{12\dots K}(t)\in\{PP\dots P,PP\dots D,\dots, NN\dots N\}$. For
example, $S_{123}(t)=PDN$, refers to the case where the transmitter
has perfect knowledge to $H_{1}$, delayed information about
$H_{2}$ and no information about $H_{3}$. We denote the CSIT availability of the channels to the $i$th
receiver over $n$ time slots by $S_{i}^{n}$. For instance, the
CSIT availability over three time slots for receiver $R_{i}$ is
given by $S_{i}^{3}=(x,y,z)$ where $x,y,z\in S_{i}$ and $x,y$ and
$z$ denote the availability of CSIT in the first, second and third
time slots, respectively. Similarly, we denote the availability of
CSIT for the channels to the first and second receivers in three
time slots ``CSIT pattern''  by $S_{12}^{3}=(X,Y,Z)$ where
$X,Y,Z\in S_{12}$.

The fraction of time associated with the availability of CSIT state $S$ for the network, denoted by $\lambda_{S}$ where $S
\in \{P,D,N\}$, is given by
\begin{equation}
    \lambda_{S}= \frac{\sum_{t=1}^{n} \sum_{i=1}^{K}\mathds{I}(S_{i}(t)=S)}{nK}
\end{equation}

\noindent where 
\begin{equation}
\mathds{I}(S_{i}(t)=S) = \begin{cases}
1, &  \text{if $S_{i}(t)=S$}  \\
0, & \text{otherwise}
\end{cases}
\end{equation}
\noindent and $n$ is the number of channel uses, and hence,
\begin{equation}
    \sum_{S \in \{P,D,N\}}^{} \lambda_{S}=1.
\end{equation}

Furthermore, we use $\Lambda(\lambda_{P},\lambda_{D},\lambda_{N})$
to denote the distribution of fraction of time of the different
states $\{P,D,N\}$ of CSIT availability.


\section{Proposed Interference Creation-Resurrection Scheme}
Motivated by the previous work of \cite{wagdy2014degrees} for the
X channel, we extend this work to the BC. In this section, we
propose a precoding scheme for the BC under alternating CSIT. The
scheme is divided into two phases. The first phase is associated
with the delayed and no CSIT states where the transmitter sends its
messages. As a result, the receivers get linear combinations of
their desired messages in addition to interference terms during
this phase. This phase is called ``interference creation.'' On the
other hand, the second phase is associated with the perfect CSIT state
and is called ``interference resurrection'' phase. In this phase,
the transmitter reconstructs the old interference by exploiting
the delayed CSIT in phase one in order to deliver new linear
combinations to the receivers free from the interference and
enable the receivers to extract their desired messages via
physical network coding.

As an illustrative example of the K-user case: first, we consider
a 3-user MISO BC with alternating CSIT pattern given by $S_{123}^{5} =
(NDD,DND,DDN,PPN,PNP)$ over five time slots. Let $u_{1}, u_{2}$
and $u_{3}$ be three independent messages intended to receiver
$R_{1}$, $v_{1}, v_{2}$ and $v_{3}$ be three independent messages
intended to receiver $R_{2}$, and $p_{1}, p_{2}$ and $p_{3}$ be
three independent messages intended to receiver  $R_{3}$.
Consequently, the proposed scheme is performed over two phases as
follows in the next subsections.
\vspace{-1em}
\subsection{Phase 1: \textbf{Interference Creation}}

 This phase consists of three time slots, each time slot is intended to
deliver an interference-free linear combination of the messages
intended for one receiver. Therefore, at the $i$th time slot,
$R_{i}$ receives a linear combination of its desired symbols while
the two other receivers $R_{j}, j\in\{1,2,3\}\backslash\{i\}$
receive interference terms.

At $t = 1$:\\
The transmitter sends all data symbols for $R_{1}$, i.e.,
\begin{equation}
X(1)= \left[ \begin{matrix} { u }_{ 1 } \\ { u }_{ 2 } \\ { u }_{ 3 } \end{matrix} \right].
\end{equation}

\noindent As a result, the received signals are given as:
\begin{eqnarray}
Y_{1}(1) &=& H_{1}(1) X(1)= L_{1}^{1}(u_{1},u_{2},u_{3})\\
Y_{2}(1) &=& H_{2}(1)X(1) = I_{2}^{1}(u_{1},u_{2},u_{3})\\
Y_{3}(1) &=& H_{3}(1) X(1) = I_{3}^{1}(u_{1},u_{2},u_{3})
\end{eqnarray}
where $L_{i}^{j}(x_{1},x_{2},x_{3})$ denotes the $j$th linear
combination of the messages $x_{1}, x_{2}$ and $x_{3}$ that is
intended for receiver $R_{i}$ and $I_{i}^{j}(z_{1},z_{2},z_{3})$
denotes the $j$th interference term for receiver $R_{i}$ which is
a function of the messages $z_{1},z_{2}$ and $z_{3}$ overheard by
receiver $R_{i}$.

At $t = 2$:\\
Similarly, the transmitter sends all data symbols for $R_{2}$ as follows:
\begin{equation}
X(2)= \left[ \begin{matrix} { v }_{ 1 } \\ { v }_{ 2 } \\ { v }_{ 3 } \end{matrix} \right].
\end{equation}
Then, the received signals are:
\begin{eqnarray}
Y_{1}(2) &=& H_{1}(2) X(2)= I_{1}^{1}(v_{1},v_{2},v_{3})\\
Y_{2}(2) &=& H_{2}(2)X(2)= L_{2}^{1}(v_{1},v_{2},v_{3})\\
Y_{3}(2) &=& H_{3}(2) X(2) = I_{3}^{2}(v_{1},v_{2},v_{3})
\end{eqnarray}


At $t = 3$:\\
Finally, the transmitter sends all data symbols for $R_{3}$:
\begin{equation}
X(3) = \left[ \begin{matrix} { p }_{ 1 } \\ { p }_{ 2 } \\ { p }_{ 3 } \end{matrix} \right].
\end{equation}
Then,
\begin{eqnarray}
Y_{1}(3) &=& H_{1}(3) X(3)= I_{1}^{2}(p_{1},p_{2},p_{3})\\
Y_{2}(3) &=& H_{2}(3)X(3) = I_{2}^{2}(p_{1},p_{2},p_{3})\\
Y_{3}(3) &=& H_{3}(3) X(3)= L_{3}^{1}(p_{1},p_{2},p_{3}).
\end{eqnarray}


\subsection{Phase 2: \textbf{Interference Resurrection}}

This phase consists of two time slots where in each time slot the
transmitted signal is designed such that it provides two
interference-free linear combinations of the messages intended to
two receivers while the third receiver gets a linear combination
of its desired messages corrupted by an interference term that can
be removed using the received interference in previous time slots.

At $t = 4$:\\
In this time slot, the transmitter utilizes the perfect CSIT at
$R_{1}$ and $R_{2}$. The transmitter delivers two interference-free terms to $R_{1}$ and $R_{2}$ while providing an interference-corrupted desired term
for $R_{3}$. The transmitted signal
is given by
\begin{eqnarray}
X(4) &=& h_{1}^{\perp}(4) \left[ \begin{matrix} { I }_{ 3 }^{2}(v_{1},v_{2},v_{3}) \\ 0 \\ 0 \end{matrix} \right] + h_{2}^{\perp}(4)\left[ \begin{matrix} { I }_{ 3 }^{1}(u_{1},u_{2},u_{3}) \\ 0 \\ 0 \end{matrix} \right] \nonumber\\
 &+& h_{(1,2)}^{\perp}(4)\left[ \begin{matrix} { I }_{ 1 }^{1}(p_{1},p_{2},p_{3}) \\ 0 \\ 0 \end{matrix} \right]
\end{eqnarray}
where $h_{i}(t)^{\perp}$ and $h_{(i,j)}(t)^{\perp} \in
\mathds{C}^{3 \times 3}$ are the orthogonal projection matrices on
the null space of $H_{i}(t)$ and on the null space of the subspace
spanned by both $H_{i}(t), H_{j}(t)$, respectively. Then,
\begin{eqnarray}
Y_{1}(4) &=& \left[H_{1}(4) h_{2}^{\perp}(4)\right]_{1}{ I }_{ 3 }^{1}(u_{1},u_{2},u_{3})   \\
&=& L_{1}^{2}(u_{1},u_{2},u_{3})\\
Y_{2}(4) &=& \left[H_{2}(4)h_{1}^{\perp}(4)\right]_{1}  { I }_{ 3 }^{2}(v_{1},v_{2},v_{3})  \\
 &=& L_{2}^{2}(v_{1},v_{2},v_{3})\\
Y_{3}(4)
&=& L_{3}^{2}(p_{1},p_{2},p_{3})+\left[H_{3}(4)h_{1}^{\perp}(4)\right]_{1}I_{3}^{2}(v_{1},v_{2},v_{3}) \nonumber\\
&+& \left[H_{3}(4)h_{2}^{\perp}(4)\right]_{1}I_{3}^{1}(u_{1},u_{2},u_{3})
\end{eqnarray}
where $\left[X\right]_{1}$ is the first element of a vector $X \in
\mathds{C}^{1 \times 3}$. In spite of receiving an interference-corrupted signal, receiver $R_{3}$ can get a linear combination of its desired signals only and remove the interference by applying a simple physical network coding as follows:
\begin{eqnarray}
L_{3}^{2}(p_{1},p_{2},p_{3})&=& Y_{3}(4)-\left[H_{3}(4)h_{1}^{\perp}(4)\right]_{1}Y_{3}(2) \nonumber\\
&-& \left[H_{3}(4)h_{2}^{\perp}(4)\right]_{1} Y_{3}(1)
\end{eqnarray}

At $t = 5$:\\
In this time slot, we deliver two interference-free terms to
$R_{1}$ and $R_{3}$ while providing a desired term for $R_{2}$
corrupted by removable interference, i.e.,
\begin{eqnarray}
X(5) &=& h_{1}^{\perp}(5) \left[ \begin{matrix} { I }_{ 2 }^{2}(p_{1},p_{2},p_{3}) \\ 0 \\ 0 \end{matrix} \right] + h_{3}^{\perp}(5) \left[ \begin{matrix} { I }_{ 2 }^{1}(u_{1},u_{2},u_{3}) \\ 0 \\ 0 \end{matrix} \right] \nonumber\\
&+& h_{(1,3)}^{\perp}(5) \left[ \begin{matrix} { I }_{ 1 }^{1}(v_{1},v_{2},v_{3}) \\ 0 \\ 0 \end{matrix} \right]
\end{eqnarray}
Then,
\begin{eqnarray}
Y_{1}(5) &=& \left[H_{1}(5)h_{3}^{\perp}(5)\right]_{1}  { I }_{ 2 }^{1}(u_{1},u_{2},u_{3})\\
 &=& L_{1}^{3}(u_{1},u_{2},u_{3})\\
Y_{2}(5)&=&  L_{2}^{3}(v_{1},v_{2},v_{3})
+ \left[H_{2}(5) h_{1}^{\perp}(5)\right]_{1} { I }_{ 2 }^{2}(p_{1},p_{2},p_{3}) \nonumber\\
&+&\left[H_{2}(5) h_{3}^{\perp}(5)\right]_{1}  { I }_{ 2 }^{1}(u_{1},u_{2},u_{3})  \\
Y_{3}(5)&=& \left[H_{3}(5) h_{1}^{\perp}(5)\right]_{1}  { I }_{ 2 }^{2}(p_{1},p_{2},p_{3}) \\
&=& L_{3}^{3}(p_{1},p_{2},p_{3})
\end{eqnarray}
Receiver $R_{2}$ can also remove the interference signal using its received signal in previous time slots, i.e.,
\begin{eqnarray}
L_{2}^{3}(v_{1},v_{2},v_{3})&=& Y_{2}(5)-\left[H_{2}(5)h_{3}^{\perp}(5)\right]_{1}Y_{2}(1) \nonumber\\
&-& \left[H_{2}(5)h_{1}^{\perp}(5)\right]_{1}Y_{2}(3)
\end{eqnarray}
 Hence, after five time slots, each receiver has three different  linear combinations of its three desired messages and the total achieved DoF for the 3-user BC is given by $D_{\Sigma}(3)= \frac{9}{5}$.
\begin{Theorem}
    The K-user broadcast channel with synergistic alternating CSIT with distribution $\in \Lambda (\lambda_{P} = \frac{(K-1)^{2}}{2K^{2}-K}, \lambda_{D} = \frac{K-1}{2K-1}, \lambda_{N}=\frac{1}{K})$ can achieve almost surely
    \begin{equation}
        D_{\Sigma}(K) = \frac{K^{2}}{2K-1}
    \end{equation}

\end{Theorem}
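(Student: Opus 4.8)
The plan is to generalize the two-phase interference creation--resurrection scheme illustrated for $K=3$ to arbitrary $K$, and to show that it delivers $K$ linearly independent observations of its $K$ intended symbols to each receiver over exactly $2K-1$ channel uses. I would let Phase 1 occupy the first $K$ slots and Phase 2 the remaining $K-1$ slots, giving a block of length $n=2K-1$. The rate claim $D_{\Sigma}(K)=K^{2}/(2K-1)$ then follows from a counting argument: each of the $K$ receivers decodes $K$ symbols, for a total of $K^{2}$ symbols over $2K-1$ slots.

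For Phase 1 I would assign to each slot $i\in\{1,\dots,K\}$ the transmission of the $K$ raw symbols intended for $R_{i}$, so that $R_{i}$ obtains one useful combination $L_{i}^{1}(\cdot)$ while every other receiver $R_{j}$ ($j\neq i$) overhears an interference term $I_{j}(\cdot)$ that is a linear combination of $R_{i}$'s symbols. This produces the $K(K-1)$ interference terms to be recycled. Assigning state $N$ to the designated receiver (which needs no transmit CSIT) and state $D$ to the other $K-1$ receivers in each Phase 1 slot accounts for $K$ instances of $N$ and $K(K-1)$ instances of $D$. The delayed CSIT guarantees that the transmitter can reconstruct every overheard interference term, since it knows the symbols it sent and learns the corresponding channel vectors with delay.

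For Phase 2 I would schedule the $K-1$ slots so that receivers $R_{2},\dots,R_{K}$ are each designated the ``interference-corrupted'' receiver exactly once, leaving $R_{1}$ always clean. In the slot that corrupts $R_{c}$, I would form the transmit vector as a superposition of $K$ reconstructed interference terms, each multiplied by an orthogonal projector built from the perfect CSIT of the $K-1$ clean receivers: the term meant to become useful for a clean receiver $R_{i}$ is projected onto the null space of the channels of all clean receivers except $R_{i}$ (removing only $K-2$ constraints, hence nonempty), while the term meant for $R_{c}$, namely the interference that $R_{1}$ overheard from batch $c$, is projected onto the null space of all $K-1$ clean receivers (a one-dimensional space). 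One then checks that each clean $R_{i}$ sees only the single resurrected term useful to it, while $R_{c}$ sees its useful term plus exactly the interference terms it already overheard in Phase 1, which it removes by physical network coding against its own past observations $Y_{c}(i)$. This assigns $P$ to the $K-1$ clean receivers and $N$ to $R_{c}$ per Phase 2 slot, yielding $(K-1)^{2}$ instances of $P$ and $K-1$ further instances of $N$; combined with Phase 1 the totals are $(K-1)^{2}$, $K(K-1)$, and $2K-1$ for $P,D,N$, which reproduce $\Lambda$ after division by $nK=(2K-1)K$.

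The main obstacle, and the only place where the ``almost surely'' qualifier enters, is showing that the $K$ observations each receiver collects are linearly independent. By the schedule above, receiver $R_{i}$ accumulates its Phase 1 combination together with the $K-1$ distinct resurrected linear forms $\{I_{c}(\cdot):c\neq i\}$ of its own symbols, so the real content is that the resulting $K\times K$ coefficient matrix in $R_{i}$'s symbols has full rank. Since each entry is a product of generic channel coefficients drawn from a continuous distribution and the forms arise from distinct projector/channel combinations, the determinant is a nonzero polynomial in the channel coefficients and therefore vanishes only on a measure-zero set; this yields decodability with probability one. I would close by confirming that all projector null spaces are nonempty and that the effective scalar gains $[H_{i}(t)h^{\perp}(t)]_{1}$ are nonzero almost surely, so that every step of the construction is well defined.
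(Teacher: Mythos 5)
Your proposal is correct and follows essentially the same route as the paper: the identical two-phase interference creation--resurrection scheme over $n=2K-1$ slots, with the same Phase 1 (send each receiver's $K$ raw symbols in its own slot, state $N$ for that receiver and $D$ for the rest) and the same Phase 2 projector construction generalizing the paper's 3-user example, yielding the same counts $(K-1)^2$, $K(K-1)$, and $2K-1$ for the $P$, $D$, $N$ states. If anything, your write-up is more complete than the paper's own proof, which consists only of the slot/state counting and leaves the general-$K$ Phase 2 construction and the almost-sure linear-independence of each receiver's $K$ observations implicit (justified via the illustrative $K=3$ example), both of which you make explicit.
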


\begin{proof}
The transmission scheme starts with sending information symbols in
phase one, i.e., interference creation phase, to provide each
receiver with a linear combination of its intended data symbols
while creating $K-1$ interference terms at each receiver. This
phase consumes $K$ time slots to deliver $K$ different linear
combinations of the data symbols to $K$ different receivers while
creating $K \times (K-1)$ interference terms that will be useful
as a side information for the receivers in the subsequent time
slots. This phase requires $K \times (K-1)$ delayed CSIT states and
$K$ no CSIT states.

In contrast, phase two, i.e., interference resurrection phase,
consumes $(K-1)$ time slots to deliver $(K-1)$ messages of
order-$K$, i.e., intended for the $K$ receivers, in order to make
each receiver decode $K$ symbols successfully. This phase requires
$(K-1)^{2}$ perfect CSIT states and $(K-1)$ no CSIT states. The fraction of CSIT
states during the two phases is given by
\begin{eqnarray}
\lambda_{P}&=& \frac{(K-1)^{2}}{K \times (2K-1)}=\frac{(K-1)^{2}}{2K^{2}-K}\\
\lambda_{D} &=& \frac{K \times (K-1)}{K \times (2K-1)}=\frac{K-1}{2K-1}\\
\lambda_{N} &=& \frac{(2K-1)}{K \times (2K-1)}=\frac{1}{K}
\end{eqnarray}
%

\end{proof}


\section{Discussion}
\noindent Remark 1: \textbf{Comparison with all delayed CSIT
\cite{maddah2012completely}}\\ For the $K$-user BC model, the achievable
DoF under the CSIT alternation pattern with the 
distribution  given in Theorem 1  is
strictly greater than the best known upper bound for the all delayed
CSIT pattern \cite{maddah2012completely}, i.e., with distribution
$\Lambda(0,1,0)$, which is $K/({1+\frac{1}{2}+\dots+\frac{1}{K}})$
DoF. In order to send $K^{2}$ successfully decoded
messages, the proposed scheme in \cite{maddah2012completely} needs
$K \times (1+\frac{1}{2}+\dots+\frac{1}{K})\approx K \times
\ln(K)$ time slots while our proposed scheme needs only $2K-1$
time slots thanks to the alternating CSIT feature. Fig.
\ref{fig::BC_MAT_vs_Proposed} shows the synergistic benefits of
CSIT alternation on the DoF versus the number of users $K$.

\begin{figure}[ht!]
\centering
\includegraphics[scale=0.5]{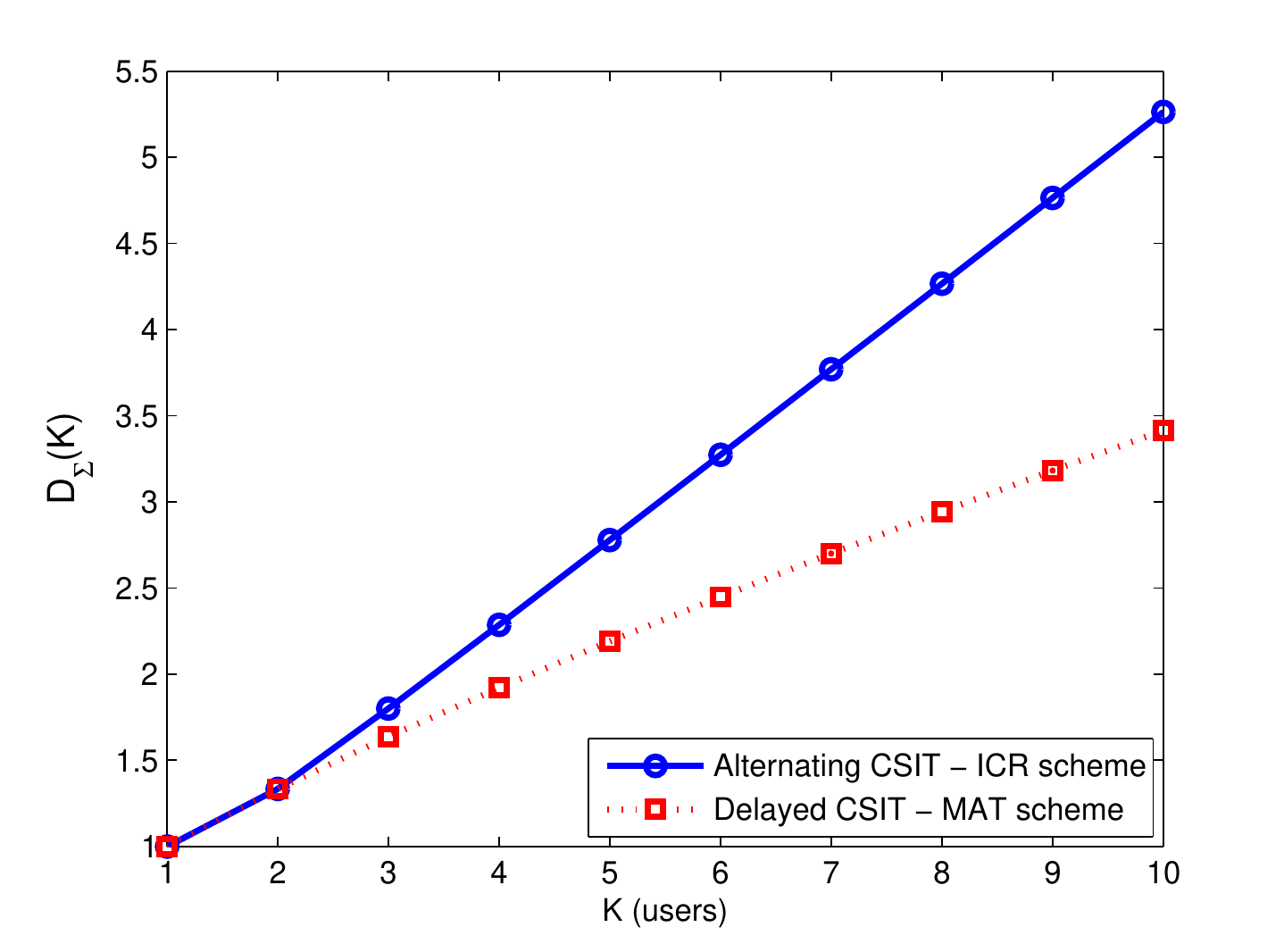}
\caption{DoF comparison for broadcast channel between all delayed and alternating CSIT models.}
\label{fig::BC_MAT_vs_Proposed}
\end{figure}

\noindent Remark 2: \textbf{Comparison with Hybrid CSIT
\cite{amuru2014degrees}}\\ The system model is similar to ours but
with hybrid CSIT, i.e., the link availability is constant over the
channel uses. As a comparison, for the case of $(P,D,D)$ the
proposed scheme in \cite{amuru2014degrees} achieves $\frac{9}{5}$
DoF, which implies that the channel states availability pattern
over the channel uses is $S_{123}^{5}=(PDD,PDD,PDD,PDD,PDD)$.
Note that this CSIT pattern  has a distribution given by
$\Lambda(\frac{5}{15},\frac{10}{15},\frac{0}{15})$. However, by
harnessing the synergy benefits of CSIT alternation in our case,
with less distribution of CSIT availability, i.e.,
$\Lambda(\frac{4}{15},\frac{6}{15},\frac{5}{15})$, our proposed
scheme can achieve the same $\frac{9}{5}$ DoF. Also, the authors
needed extensive channel extension to achieve this DoF by sending
18 symbols (10 symbols for $R_{1}$, 4 symbols for $R_{2}$ and 4
symbols for $R_{3}$) in 10 time slots. On the other hand, the proposed scheme requires only $5$ time slots to send 3 messages to each user.

\noindent Remark 3: \textbf{Comparison with Tandon et.al.
\cite{tandon2012minimum}}\\ For an M-antenna transmitter and K users, the proposed scheme in
\cite{tandon2012minimum}  assumes that at each time slot perfect
CSIT is present to $\min(M,K)$ receivers and no CSIT to the
remaining $K- \min(M,K)$ receivers. A total DoF of $\min(M,K)$ is
achievable at each time slot and therefore a sum DoF of
$\min(M,K)$  is also achievable for this scheme. The fraction of
time $\lambda$ that perfect CSIT obtained from any specific
receiver is $\min(M,K)/K$. For $M=K$, the fraction of time for
perfect CSIT $\lambda = 1$ which means perfect CSIT should be
available about all receivers.

\noindent Remark 4: \textbf{Synergy benefits of CSIT pattern}\\ The synergy
gain of delayed CSIT followed by perfect CSIT is useful to
reconstruct the interference terms in prior time slots and
constructing  messages useful for the receivers in subsequent time
slots. Note that the DoF for the 3-user BC with perfect CSIT is 3, with delayed CSIT is bounded by
$\frac{18}{11}$, and with no CSIT is one DoF. The alternation of
CSIT states $S_{123}$ over five time slots works cooperatively to provide a DoF greater than the DoF of the sum of their individual DoF for the same network. As an example, consider the CSIT alternation pattern given by $S^{5}_{123}=(NNN,DDD, DDD, DDD, PPP)$. If there is
no interaction between the five time slots, the DoF that can be obtained are given by $1 \times\frac{3}{15} + \frac{18}{11} \times \frac{9}{15} + 3 \times \frac{3}{15} = \frac{98}{55} < \frac{9}{5} $. However, harnessing the synergistic benefits of alternating 	CSIT, we can achieve more DoF ( $\frac{9}{5}$ DoF) with less CSIT pattern $S_{123}^{5} = (NDD,DND,DDN,PPN,PNP)$. Table 1 lists the beneficial synergistic CSIT alternation patterns with $\Lambda(\frac{4}{15},\frac{6}{15},\frac{5}{15})$ that can be utilized to achieve $\frac{9}{5}$ DoF for the 3-user BC channel. We can see from Table. 1 that there are only $|S_{123}^{5}|=|S_{123}^{3}| \times |S_{123}(4,5)|$ = 36  CSIT alternation patterns with synergistic benefits.


\begin{table}
\begin{center}
    \begin{tabular}{|c|c|}
    \hline
     $S_{123}^{3}$     & $S_{123}(4,5)$  \\[0.05in] \hline
    $(NDD,DND,DDN)$              & $(PPN, PNP)$       \\
    $(NDD,DDN,DND)$              & $(PNP, PPN)$        \\
    $(DND,DDN,DDN)$              & $(PPN, NPP)$        \\
   $ (DND,DDN,NDD)$ & $(NPP, PPN)$ \\
   $ (DDN,DND,NDD)$ & $(NPP, PNP)$ \\
      $(DDN,NDD,DND)$ & $(PNP, NPP)$ \\\hline
    \end{tabular}
    \vspace{0.2in}
    \caption{All synergistic CSIT patterns for the 3-user BC with $\Lambda(\frac{4}{15},\frac{6}{15},\frac{5}{15})$.}
    \end{center}
\end{table}

\noindent Remark 5: \textbf{Upper bound on the DoF }\\ An outer bound on the DoF region of the $K$-user BC under alternating CSIT was introduced in \cite{tandon2012minimum}. The achievable DoF, $\{d_i\}_{i=1}^K$, to the $K$ receivers is bounded by 
\begin{eqnarray}
K d_{1}+d_{2}+ \dots + d_{K} &\!\!\leq\!\! & K +(K-1)\gamma_{1}\\
d_{1}+K d_{2}+ \dots +d_{K} &\!\!\leq \!\!& K +(K-1)\gamma_{2}\\
\vdots \nonumber \\
d_{1}+d_{2}+ \dots +K d_{K} &\!\! \leq\!\! & K+(K-1)\gamma_{K}
\end{eqnarray}
\noindent where
\begin{equation}
    \gamma_{i}= \frac{\sum_{t=1}^{n} \mathds{I}(S_{i}(t)=P)}{n} \leq \gamma, \forall i=1, \dots, K
\end{equation}
\noindent is the fraction of time where perfect CSIT for receiver $i$ is available. Adding the the previous $K$ bounds, yields the following upper bound on the total DoF 
\begin{equation}
D_{\Sigma}(K) = d_{1}+d_{2}+ \dots +d_{K} \leq \frac{K^{2}+(K-1)\sum_{i=1}^{K}\gamma_{i}}{2K-1}
\end{equation}

Fig. \ref{fig::Upperbounds} depicts the comparison between the achievable DoF with perfect CSIT fraction $(\gamma_{1}, \gamma_{2}, \dots, \gamma_{K})$ where $\gamma_{i} = \gamma = \frac{K-1}{2K-1} $ and $\gamma_ {j \neq i} = \frac{K-2}{2K-1} < \gamma$, $\forall i,j \in\{1,\dots, K\}$ with the upper bound on the achievable DoF with the same alternating CSIT fraction, and the upper bound when $\gamma=1$ for the K-user BC.

\begin{figure}[ht!]
\centering
\includegraphics[scale=0.5]{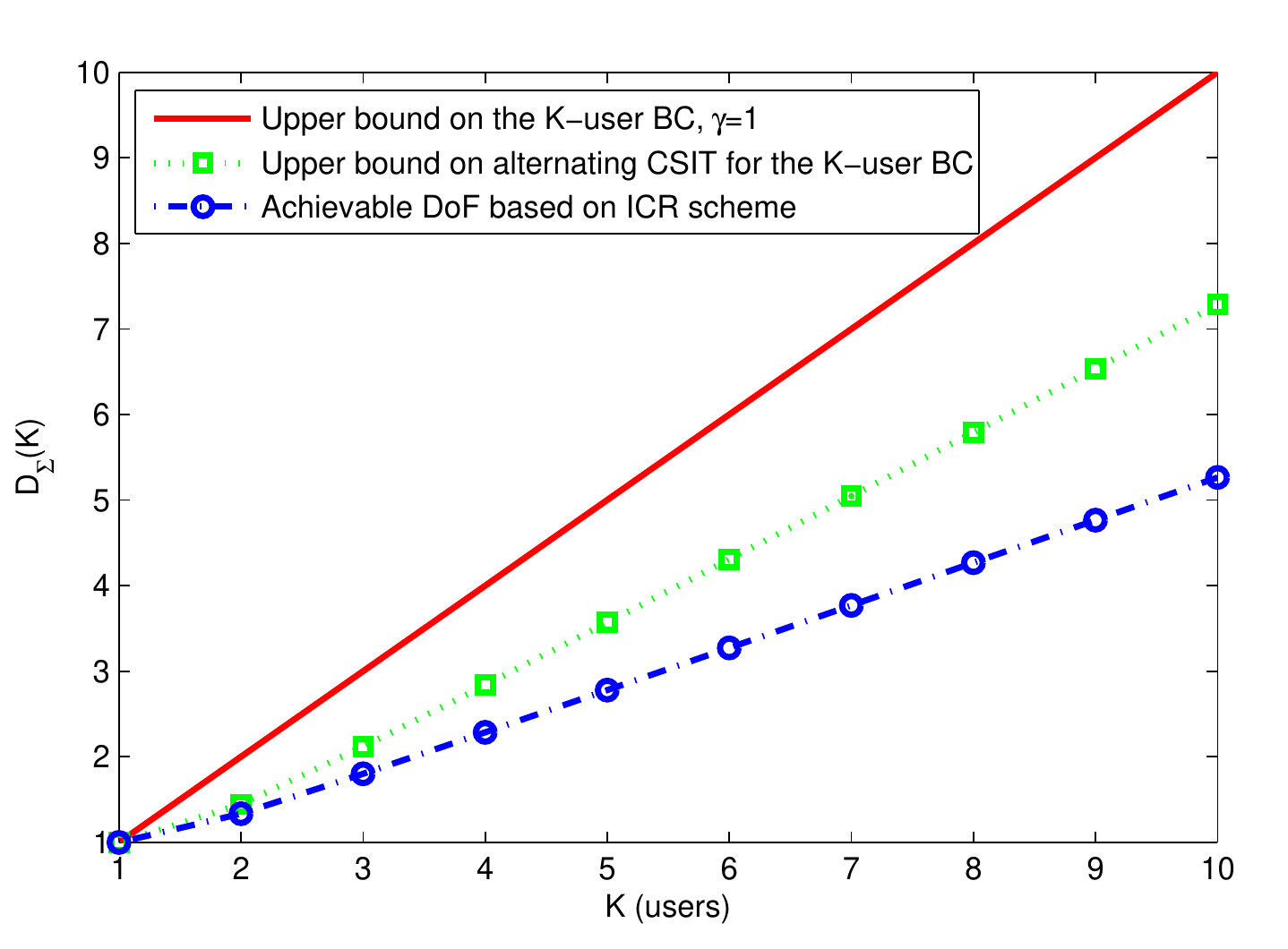}
\caption{DoF comparison for the K-user BC.}
\label{fig::Upperbounds}
\end{figure}

\noindent Remark 6: \textbf{DoF region characterization}

For the 3-user case, in order to find the optimal DoF for each receiver for a given perfect CSIT distribution ($\gamma_{1}$,$\gamma_{2}$,$\gamma_{3}$), we solve the following linear program
 \begin{eqnarray} 
\text{P1:}  \max_{d_{1},d_{2},d_{3}}  &\!\!\!\!\!\! & d_{1}+d_{2}+d_{3} \nonumber\\
 \text{s.t.}  &\!\!\!\!\!\! & 3 d_{1}+d_{2}+ d_{3}\leq  3 +2\gamma_{1} \\
 &\!\!\!\!\!\! & d_{1}+3 d_{2}+ d_{3} \leq  3 + 2\gamma_{2} \\
 &\!\!\!\!\!\! & d_{1}+d_{2}+ 3 d_{3}  \leq   3+2\gamma_{3}  \\
 &\!\!\!\!\!\! & 0 \leq d_{i} \leq  1, \hspace{0.1in} \forall i=1, 2, 3  
\end{eqnarray}

\noindent Since the constraints of the linear program are active, we can get a  general closed form expression as a function of $\gamma_{i}$'s by using the reduced echelon form method. Then, the solution will be as follows
\begin{equation}
d^{*}_{i} = \frac{3+4\gamma_{i}-\sum_{j=1, j\neq i}^{3}\gamma_{j}}{5}, \hspace{0.1in} \forall i=1,2,3
\end{equation}

For a perfect CSIT distribution $(\gamma_{1},\gamma_{2},\gamma_{3})=(\frac{2}{5},\frac{1}{5},\frac{1}{5})$ then the optimal DoF tuple is given by $d^{*}=(0.84,0.64,0.64)$ which is greater than the achievable DoF tuple $d = (0.6,0.6,0.6)$. Fig. \ref{fig::Region} shows the achievable DoF region for the 3-user BC: the red point is the achievable DoF under perfect CSIT fraction with $(\gamma_{1},\gamma_{2},\gamma_{3})=(2/5,1/5,1/5)$ (W.L.O.G we set $\gamma_{1}=\gamma$ and $\gamma_{i \neq 1} < \gamma$), and the time sharing scheme is achieved by any convex combinations of the corner points. 
\begin{figure}[ht!]
\centering
\includegraphics[scale=0.5]{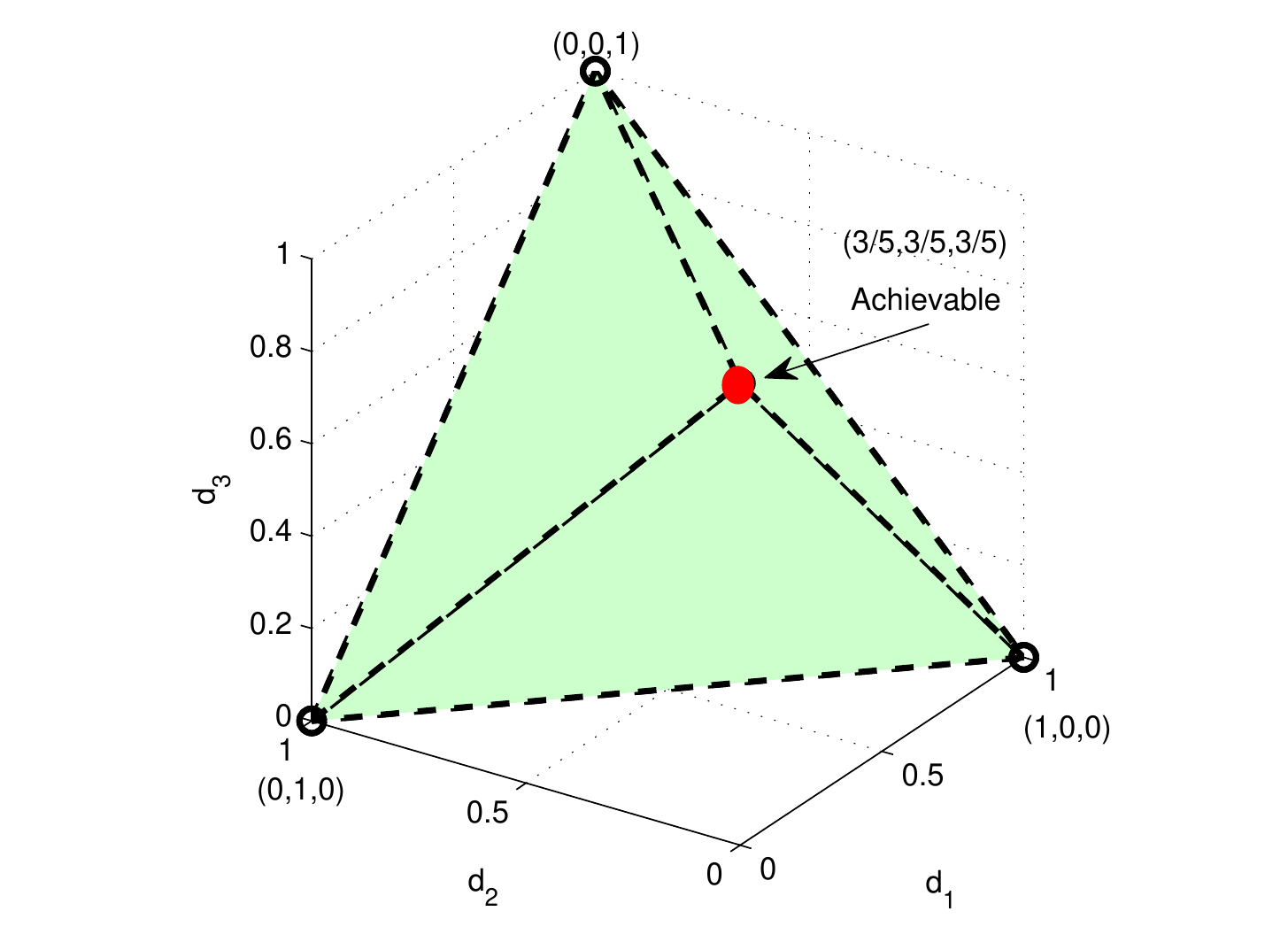}
\caption{\small{Achievable DoF Region for the 3-user BC.}}
\label{fig::Region}
\end{figure}

\begin{table}
\begin{center}
    \begin{tabular}{|c|c|c|}
    \hline
    $(\gamma_{1},\gamma_{2},\gamma_{3}) $      & $(d_{1},d_{2},d_{3})$ & Scheme \\ \hline
    $(1,0,0)$              & $(1,0,0)$ & ---        \\
    $(0,1,0)$              & $(0,1,0)$  & ---        \\
    $(0,0,1)$              & $(0,0,1)$ & ---          \\
   $(1/3,1/3,1/3)$ & $(1/3,1/3,1/3)$ & Time sharing \\
   $(2/5,1/5,1/5)$ & (3/5,3/5,3/5)& ICR \\
      $(1/5,2/5,1/5)$ & (3/5,3/5,3/5)& ICR\\
         $(1/5,1/5,2/5)$ & (3/5,3/5,3/5)& ICR \\
         $(1,1,1)$ & (1,1,1)& Conventional \\ \hline
    \end{tabular}
    \vspace{0.2in}
    \caption{\small{Perfect CSIT distribution among three users and its achievable degrees of freedom.}}
    \end{center}
\end{table}

\section{Conclusion}
We have investigated the synergistic benefits of the alternation
of CSIT for the K-user broadcast channel. The available CSIT
alternates between three possible states of availability
$(P,D,N)$. We have showed that $\frac{K^{2}}{2K-1}$  DoF can be
attained almost surely under CSIT distribution $\in \Lambda
(\lambda_{P} = \frac{(K-1)^{2}}{K \times (2K-1)}, \lambda_{D} =
\frac{K-1}{2K-1}, \lambda_{N}=\frac{1}{K})$. Also, we have
compared our scheme with prior work and highlighted the advantages
of having alternating CSIT to different receivers.

\nocite{*}
\bibliographystyle{IEEEtran}
\bibliography{mybibfile}

\end{document}